\newtheorem{remark}{Remark}
\newtheorem{proposition}{Proposition}
\newcommand{\Rant}{R_{\text{ant}}}
\newcommand{\Vdc}{V_{\text{DC}}}
\newcommand{\fcut}{f_{\rm cut}}
\newcommand{\Df}{\Delta\!f}
\begin{document}
\title{RC Filter Design for Wireless Power Transfer:\\ A Fourier Series Approach}

\author{Constantinos Psomas, \IEEEmembership{Senior Member, IEEE}, and Ioannis Krikidis, \IEEEmembership{Fellow, IEEE}\vspace*{-5mm}
\thanks{C. Psomas and I. Krikidis are with the Department of Electrical and Computer Engineering, University of Cyprus, Cyprus (email: \{psomas, krikidis\}@ucy.ac.cy). This work has received funding from the European Research Council (ERC) under the European Union's Horizon 2020 research and innovation programme (Grant agreement No. 819819).}}

\maketitle

\begin{abstract}
In this letter, we study the impact of the low-pass resistor-capacitor (RC) filter on radio frequency (RF) wireless power transfer (WPT). The RC filter influences both the RF bandwidth by removing the harmonics as well as the ripple voltage at the output of the rectifier. In particular, a large (small) RC time constant, reduces (increases) the ripple but decreases (enhances) the direct-current (DC) component. By following a Fourier series approach, we obtain closed-form expressions for the rectifier's output voltage, the RC filter's output as well as the DC voltage. Our analytical framework provides a complete characterization of the RC filter's impact on the WPT performance. We show that this complete and tractable analytical framework is suitable for the proper design of the RC filter in WPT systems.
\end{abstract}

\begin{IEEEkeywords}
RC filter, rectenna modeling, wireless power transfer.\vspace{-2mm}
\end{IEEEkeywords}

\section{Introduction}
Far-field wireless power transfer (WPT) relates to energy harvesting via radio-frequency (RF) signals, where in contrast to conventional energy harvesting techniques (i.e., from renewable sources), it is a continuous, controllable and on-demand process \cite{NS}. Its efficiency relies on a appropriate end-to-end design of both the transmitter and the receiver. Strictly speaking, the aim is to increase the ratio between the RF power harvested at the receiver over the one emitted by the transmitter. An important element of the receiver's rectifier is the low-pass resistor-capacitor (RC) filter \cite{PAN, SC}. Indeed, the RC filter limits the RF bandwidth of the circuit by removing the harmonics but also controls the ripple voltage at the rectifier's output.

However, existing works do not discuss the impact of the RC filter on the WPT performance but mainly focus on the waveform design and the non-linearities of the rectifying circuit. Particularly, the work in \cite{AG} shows experimentally that waveforms with high peak-to-average power ratio (PAPR) increase the rectifier's RF to direct-current (DC) conversion efficiency. A theoretical study on multisine waveform design for WPT is considered in \cite{BC}, which proposes a non-linear rectenna model and, by assuming perfect channel state information, uses it to design multisine waveforms that increase the WPT efficiency. In \cite{IK}, a low-complexity tone-index modulation technique is proposed, which exploits multisine waveforms for WPT and embeds information in the number of tones. Also, the work in \cite{RM}, focuses on the waveform's optimal input distribution that maximizes the information transfer conditioned on the minimum RF harvesting. However, the aforementioned theoretical studies assume that the RC time constant is very large and neglect its effect on the energy harvesting. On the other hand, some efforts were made in \cite{PAN} to study the effects of the RC filter. Nevertheless, these were shown experimentally and no theoretical investigation has been undertaken so far.

Motivated by this, in this paper, we develop a mathematical framework to address this gap. In particular, we deal with a simple point-to-point WPT setup, where the destination harvests energy from the source's RF signal. Our focus lies on rectenna modeling and the impact of the low-pass RC filter on the energy harvesting. Following a mathematical approach based on Fourier analysis, allows us to derive closed-form expressions for the rectifier's output voltage, the RC filter's output as well as the DC voltage. Our analysis accurately captures the trade-off between the ripple voltage and the DC component in terms of the RC time constant. To the best of our knowledge, the developed analytical framework is the first in the WPT literature and it is appropriate for a thorough RC filter analysis and design for WPT systems.

\begin{figure}\centering
  \includegraphics[width=0.8\linewidth]{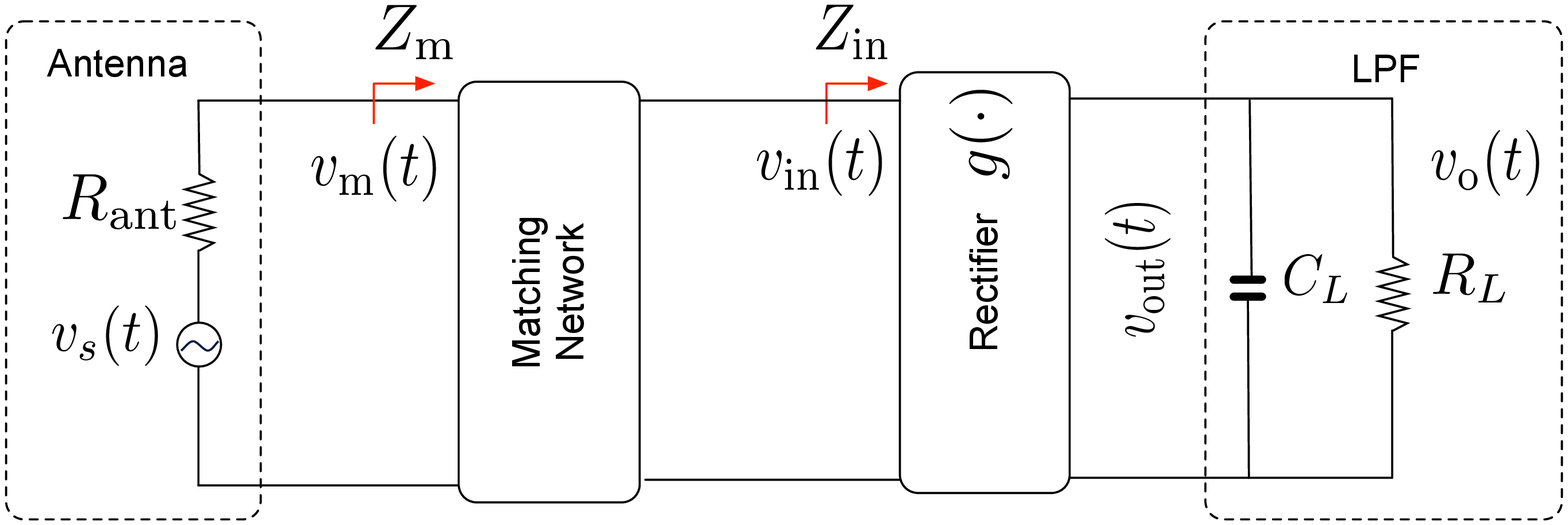}\vspace*{-1mm}
  \caption{The rectenna model consisting of an antenna, a matching network, a rectifier and a low-pass RC filter.}\label{fig1}\vspace*{-1mm}
\end{figure}

\section{Rectenna Model}

Consider a basic point-to-point WPT system, where the source aims to wirelessly transfer energy to a destination. The destination harvests energy from the source's RF signal through the employment of a rectenna; we consider the rectenna model illustrated in Fig. \ref{fig1}. The source transmits a sinewave signal
\begin{align}
  x(t) = A\cos(2\pi f_c t), ~ 0 \leq t \leq T,
\end{align}
where $f_c$ denotes the carrier frequency, $A$ is the signal's amplitude and $T$ is the duration of the waveform. The transmitted signal passes through a flat fading channel with channel gain $h$ and phase shift $\theta$. Thus, the received signal at the destination can be expressed as 
\begin{align}
  y(t) = |h| A \cos(\omega_c t + \theta),
\end{align}
where $\omega_c = 2 \pi f_c$ is the angular frequency; in this work, we will consider $h = 1$ and $\theta = 0$, without loss of generality\footnote{By keeping $h$ and $\theta$ fixed, allows for the derivation of simple closed-form expressions. The general case is straightforward but requires the use of their distribution functions.}.

The received signal is forwarded to a complex-conjugate matching network, which attempts to match the rectifier input impedance $Z_m$ to the antenna impedance $\Rant$, in order to increase the power transfer; in what follows, we assume perfect matching, i.e. $\Rant = Z_m$. In other words, we consider a theoretical bound, where the maximum possible power transfer is achieved \cite{SC, RM}. Then, by using the average power conservation law, the average received power is equal to the rectifier's input, written as
\begin{align}
\frac{1}{T}\int_0^T |y(t)|^2 dt &= \Re\left\{\frac{1}{T}\int_0^T \frac{|v_m(t)|^2}{\Rant} dt\right\}\nonumber\\
&= \Re\left\{\frac{1}{T}\int_0^T \frac{|v_{\text{in}}(t)|^2}{Z_{\text{in}}^*} dt\right\},
\end{align}
where $v_m(t)$ and $v_{\text{in}}(t)$ are the voltage signals before and after the matching network, respectively, $Z_{\text{in}}$ is the input impedance after the matching network, $\Re\{\cdot\}$ denotes the real part operator, and $Z^*$ is the conjugate of the complex number $Z$. As such, it follows that
\begin{align}
  y(t) = \frac{v_m(t)}{\sqrt{\Rant}}=v_{\text{in}}(t)\sqrt{\Re\{1/Z_{\text{in}}^*\}},
\end{align}
where we have
\begin{align}
  v_m(t)=\frac{v_s(t)}{2},
\end{align}
due to perfect matching \cite{RM}. Then, the input at the rectifier is
\begin{align}
  v_{\text{in}}(t) = y(t)\sqrt{\Re\{Z_{\text{in}}^*\}}.
\end{align}
For an idea diode and a parallel RC low-pass filter (LPF), we have \cite{SC,SC2}
\begin{align}
  Z_{\text{in}} = \frac{R_L}{1+\jmath\omega_c \tau},
\end{align}
where $\jmath = \sqrt{-1}$ is the imaginary unit, and so
\begin{align}
  \Re\{Z_{\text{in}}^*\} = \frac{R_L}{1+\omega_c^2 \tau^2},
\end{align}
where $\tau = R_L C_L$ is the RC time constant and $R_L$ and $C_L$ are the filter's load resistance and capacitor, respectively. Hence, the input voltage at the rectifier becomes equal to 
\begin{align}
  v_{\text{in}}(t) = \delta A \cos(\omega_c t),
\end{align}
with
\begin{align}
  \delta \triangleq \sqrt{\frac{R_L}{1+\omega_c^2 \tau^2}}.
\end{align}

It is worth noting that the matching network acts as a passive voltage amplifier for the received signal by utilizing a resonator with a high quality factor \cite{RM,TLE}; as such, $\delta$ refers to the amplification factor.

Finally, the diode-based circuit is modeled by two non-linear functions $g(x)=|x|$ (full-wave rectifier) and $g(x)=\max(0,x)$ (half-wave rectifier) \cite{SC, SC2}. The full-wave rectifier enables rectification during the entire waveform duration, whereas half is achieved with the half-wave rectifier. Despite their simplicity, these functions serve as a useful guideline for the RC filter design. Nevertheless, our analytical approach can be adapted to other more complex functions as well.


\section{RC Filter Design}
In this section, we provide the main results of our work. A validation of the RC filter's impact on the WPT performance is firstly presented through circuit simulations. Then, the analytical framework based on Fourier analysis is described.

\begin{figure}\centering
	\includegraphics[width=0.7\linewidth]{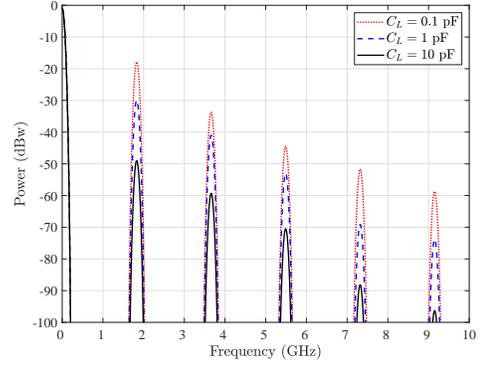}
	\caption{Frequency response of the harvested energy by exciting a single-tone.}\label{spectrum}\vspace*{-1mm}
\end{figure}

\subsection{Circuit Simulations}\label{circ_sim}

In order to verify the effects of the RC circuit, we provide simulation results of a diode bridge rectifier circuit by using LTspice \cite{LTS}. The simulated circuit implements the well-known full-wave bridge rectifier\footnote{The choice of rectifier is made mainly as an example for illustrating the RC filter's impact but other more efficient circuits could also be considered.}. The bridge is composed out of four SMS-7630 Schottky diodes; these diodes are ideal as they operate at low input powers \cite{BC,RM}. To evaluate the effect of the RF bandwidth, the serial resistance is set to $\Rant = 50 ~\Omega$ and the load resistance to $R_L = 2 ~{\rm k}\Omega$. Moreover, we consider $f_c = 915$ MHz and $A=1$ V. For the sake of consistency, we assume that the circuit is perfectly matched, and therefore, the matching circuit is omitted. Fig. \ref{spectrum} plots the frequency response of the harvested energy measured at the RC filter when a single-tone is fed to the circuit for different values for the capacitance. We can see that choosing a proper capacitance $C_L$ for a given load resistor, the several intermodulation terms may be effectively filter out, and hence, the ripples are also decreased. This provides the impetus for the analytical framework given in the next subsection.

\subsection{Fourier Series Analysis}
The output of the rectifier is a periodic, real function and therefore its Fourier series representation can be written as follows \cite{ISG}
\begin{align}
  v_{\text{out}}(t) &= g(v_{\text{in}}(t))\nonumber\\
  &=\delta A \left(\frac{a_0}{2}+\sum_{k=1}^{\infty}[a_k \cos(\omega_ckt)+b_k\sin(\omega_ckt)] \right)\nonumber\\
  &=\delta A \left(\frac{a_0}{2}+\sum_{k=1}^\infty d_k \cos(\omega_c k t+ \phi_k) \right),\label{eq1}
\end{align}
where $a_k$ and $b_k$ are the Fourier coefficients of $v_{\text{out}}(t)$, $d_k = \sqrt{a_k^2+b_k^2}$, and $\phi_k = \tan^{-1}(-b_k/a_k)$. The coefficients for the considered diode models are evaluated below.

\begin{proposition}\label{prop1}
The Fourier coefficients $a_k$ for the full-wave rectifier are
\begin{align}\label{coeff_full}
a_1 = 0, ~ a_k = \frac{4}{\pi(1-k^2)} \cos\left(\frac{\pi k}{2}\right), ~ k \neq 1,
\end{align}
and for the half-wave rectifier are
\begin{align}\label{coeff_half}
a_1 = \frac{1}{2}, ~ a_k = \frac{2}{\pi(1-k^2)} \cos\left(\frac{\pi k}{2}\right), ~ k \neq 1.
\end{align}
In both cases, $b_k = 0$, $\forall \,k$.
\end{proposition}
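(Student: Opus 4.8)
The plan is to reduce both cases to one elementary Fourier-coefficient integral. Since $g(x)=|x|$ and $g(x)=\max(0,x)$ are both positively homogeneous and $\delta A\ge 0$, the output simplifies to $v_{\text{out}}(t)=\delta A\,g(\cos\omega_c t)$; comparing with \eqref{eq1}, it therefore suffices to compute the Fourier coefficients of $f(u):=g(\cos u)$ on $(-\pi,\pi)$ against the basis $\{\cos(ku),\sin(ku)\}$, which after the substitution $u=\omega_c t$ are precisely the $a_k,b_k$ in the statement. Because $\cos u$ is even and both nonlinearities preserve evenness, $f$ is even, so $b_k=\tfrac1\pi\int_{-\pi}^{\pi}f(u)\sin(ku)\,du=0$ for every $k$; this settles the last assertion.

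For the half-wave rectifier, $f(u)=\cos u$ on $(-\tfrac\pi2,\tfrac\pi2)$ and $f(u)=0$ elsewhere on $(-\pi,\pi)$, so $a_k=\tfrac1\pi\int_{-\pi/2}^{\pi/2}\cos u\cos(ku)\,du$. For $k\ne 1$ I would apply the product-to-sum identity $\cos u\cos(ku)=\tfrac12\bigl[\cos((k+1)u)+\cos((k-1)u)\bigr]$, integrate termwise, and use $\sin\bigl((k\pm1)\tfrac\pi2\bigr)=\pm\cos\bigl(\tfrac{k\pi}{2}\bigr)$ to collapse the boundary terms; the two resulting fractions combine to $\tfrac{2}{\pi(1-k^2)}\cos(\tfrac{\pi k}{2})$, i.e. \eqref{coeff_half} (and, evaluated at $k=0$, it also yields the DC value $a_0=2/\pi$). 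The value $k=1$ must be treated on its own, since there $\cos((k-1)u)\equiv 1$ and the identity degenerates: directly, $a_1=\tfrac1\pi\int_{-\pi/2}^{\pi/2}\cos^2u\,du=\tfrac12$.

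For the full-wave rectifier, $f(u)=|\cos u|$, and I would write $a_k=\tfrac2\pi\int_0^{\pi}|\cos u|\cos(ku)\,du$ and split the integral at $u=\tfrac\pi2$, where $|\cos u|=\cos u$ on $(0,\tfrac\pi2)$ and $|\cos u|=-\cos u$ on $(\tfrac\pi2,\pi)$. Applying the same product-to-sum expansion to each piece, the two contributions reinforce and give, for $k\ne 1$, exactly twice the half-wave coefficient, i.e. $a_k=\tfrac{4}{\pi(1-k^2)}\cos(\tfrac{\pi k}{2})$, which is \eqref{coeff_full}; an equivalent shortcut is the identity $|\cos u|=\max(0,\cos u)+\max(0,\cos(u+\pi))$, showing that the full-wave series equals the half-wave series plus its $\pi$-shift, which doubles the even-$k$ harmonics and cancels the odd ones (in particular the $k=1$ term). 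For $k=1$ the split integral gives $\int_0^{\pi/2}\cos^2u\,du-\int_{\pi/2}^{\pi}\cos^2u\,du=0$, so $a_1=0$.

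The calculation is routine; the only points that require attention are the isolated treatment of $k=1$ where the product-to-sum identity degenerates, the sign bookkeeping in the full-wave split at $u=\tfrac\pi2$, and keeping the normalisation consistent with the fact that the rectified waveform has half the period of $\cos\omega_c t$ — which is exactly what makes every odd harmonic vanish through the factor $\cos(\tfrac{\pi k}{2})$.
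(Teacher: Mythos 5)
Your proof is correct and follows essentially the same route as the paper's appendix: the substitution $u=\omega_c t$, the sign-based split of the rectified cosine at $\pm\pi/2$, product-to-sum identities with a separate treatment of the degenerate case $k=1$, and evenness to get $b_k=0$. (One cosmetic point: for odd $k\ge 3$ the two pieces of the full-wave integral cancel rather than reinforce, but since each piece is already zero the stated conclusion $a_k = 2a_k^{\text{half}}$ still holds.)
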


\begin{proof}
See Appendix.
\end{proof}

From the above proposition, it follows that $d_k = |a_k|$ and $\phi_k = 0$ if $a_k > 0$ and $\phi_k = \pi$, otherwise. It is important to mention that $a_k < 0$ when $k = 4n, n \in \mathbb{Z}^+$.

The low-pass RC filter limits the RF bandwidth of the circuit by removing the harmonics of $v_{\text{out}}(t)$ and controls the ripples of the signal at the time domain. The parallel RC filter can be modeled by the transfer function
\begin{align}
  H(f) = \frac{R_L}{1+\jmath 2\pi f R_L C_L}.
\end{align}
The filter's effect is illustrated in Fig. \ref{fig2}, where $\fcut = 1/(2\pi\tau)$ is the filter's cut-off frequency, that is, the harmonics with frequencies larger than $\fcut$ get (ideally) greatly attenuated. By taking into account the Fourier series representation of the filter's input in \eqref{eq1}, the output of the RC filter is written as 
\begin{align}
  v_{\rm o}(t) = \delta A \bigg(\frac{a_0 R_L}{2} +& \sum_{k=1}^\infty |H(k f_c)| d_k\nonumber\\  &\times \cos(\omega_c k t + \phi_k + \angle{H(k f_c)})\bigg),\label{vo}
\end{align}
where $|H(k f_c)| = R_L/\sqrt{1 + (2\pi kf_c\tau)^2}$ and $\angle{H(k f_c)} = \tan^{-1}(-2\pi kf_c\tau)$ is the magnitude and the argument of $H(k f_c)$, respectively. The peak of the time-domain ripple\footnote{The ripple refers to the variation in DC voltage at the output of the rectifier.} $\rho$ at the filter's output can be obtained by
\begin{align}\label{peak_ripple}
\rho = \delta A R_L\left(\frac{a_0}{2} + \sum_{k=1}^\infty \frac{d_k \cos(\phi_k)}{\sqrt{1 + (2\pi kf_c\tau)^2}}\right),
\end{align}
which occurs at time instants $t=\tan^{-1}(-2\pi kf_c\tau)/(\omega_c\tau)$, which gives $\cos(\phi_k) = 1$ if $a_k > 0$ and $\cos(\phi_k) = -1$, otherwise.

Finally, the DC voltage is the average $v_o(t)$ over the waveform duration, that is,
\begin{align}
  \Vdc = \mathbb{E}_t\{v_o(t)\} = \delta A R_L\frac{a_0}{2},
\end{align}
where $a_0$ is the Fourier coefficient that contributes to the DC component, which is $a_0 = 4/\pi$ (full-wave) and $a_0 = 2/\pi$ (half-wave). Thus, as expected, the DC voltage achieved by the full-wave is double of that achieved by the half-wave rectifier.

\begin{figure}\centering
  \includegraphics[width=0.8\linewidth]{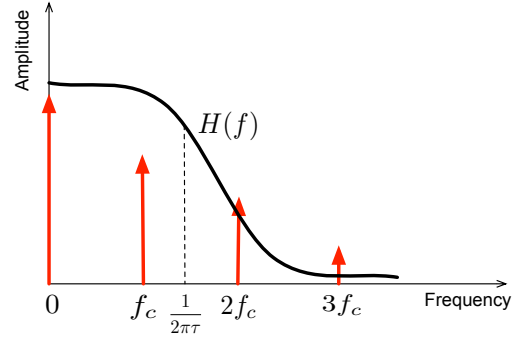}
  \caption{Low-pass RC filter with cut-off frequency $\fcut=1/(2\pi\tau)$.}\label{fig2}
\end{figure}

\begin{remark}
When the RC time constant is very large, i.e. $\tau \to \infty$, we have $\delta \to 0$. It follows that
\begin{align}\label{tauinf}
\lim_{\tau \to \infty} \Vdc = 0.
\end{align}
On the other hand, a small RC time constant, i.e. $\tau \to 0$, results in $\delta \to \sqrt{R_L}$. Therefore,
\begin{align}
\lim_{\tau \to 0} \Vdc = A R_L \sqrt{R_L} \frac{a_0}{2}.
\end{align}
\end{remark}

This remark highlights the importance of the RC filter's impact on energy harvesting, even though most works in the literature assume that $\tau\to\infty$. Note that, in practice, the output DC voltage when $\tau \to \infty$ is non-zero but negligible. Also, observe that for the case $\tau \to 0$, we obtain the maximum DC voltage but also the maximum time-domain ripple, given by
\begin{align}\label{max_ripple}
\rho_{\rm max} = AR_L \sqrt{R_L} \left(\frac{a_0}{2} + \sum_{k=1}^\infty d_k \cos(\phi_k)\right),
\end{align}
as all the harmonics are present at the output of the RC filter. In fact, when $\tau \to 0$, we have $v_{\rm o}(t) = R_L v_{\rm out}(t)$ since this leads to $H(f) \to R_L$.

\subsection{Discussion on Multisine Signals}
Multisine signals have particular interest for WPT as they boost the RF harvesting due to their high PAPR \cite{AG}. Thus, we now discuss how the proposed analytical framework can be extended to this case. Specifically, the source transmits an unmodulated $N$-tone multisine signal with zero phase arrangement and intercarrier frequency spacing $\Df$ \cite{PAN, IK}. In this case, the transmitted signal can be written as
\begin{align}
x(t) = U(t)\cos(2\pi f_c t),
\end{align}
where $U(t) = A \sin(N\pi \Df t)/\sin(\pi \Df t)$ and $f_c\gg \Df$. The envelope $U(t) = U(t+T)$ is a periodic function with $T = 2k/\Df$ and fundamental frequency $f_0 = \Df/2$. Observe that for $\Df \to 0$, we have $U(t) \approx AN$, which follows from the small-angle approximation. In other words, the Fourier series representation as well as the DC voltage is equal to the single sinewave case but scaled linearly by the number of tones $N$.

The derivation of the Fourier coefficients for the multisine case is out of the scope of this letter. However, we provide $a_0$ for $N=2$, since it characterizes the DC voltage at the output of the filter. Specifically, for the full-wave rectifier we get
\begin{align}
a_0 = \frac{8 f_c \Big(2 f_c - \Df \sin \Big(\frac{\pi}{4} \frac{\Df}{f_c}\Big)\Big)}{\pi(4 f_c^2 - \Df^2)}\cos \left(\frac{\pi}{4}\frac{\Df}{f_c}\right),
\end{align}
whereas for the half-wave rectifier we have
\begin{align}
a_0 = \frac{8 f_c^2}{\pi(4f_c^2-\Df^2)} \cos\bigg(\frac{\pi}{4}\frac{\Df}{f_c}\bigg).
\end{align}
The above can be derived using the methodology in the appendix. Observe that, in this case, the full-wave coefficient is greater than the half-wave one by a factor $2 - \Df \sin \Big(\frac{\pi}{4} \frac{\Df}{f_c}\Big)/f_c$. Moreover, for $f_c \to \infty$, we end up with the coefficients of Proposition \ref{prop1}.

\begin{figure}\centering
  \includegraphics[width=0.9\linewidth]{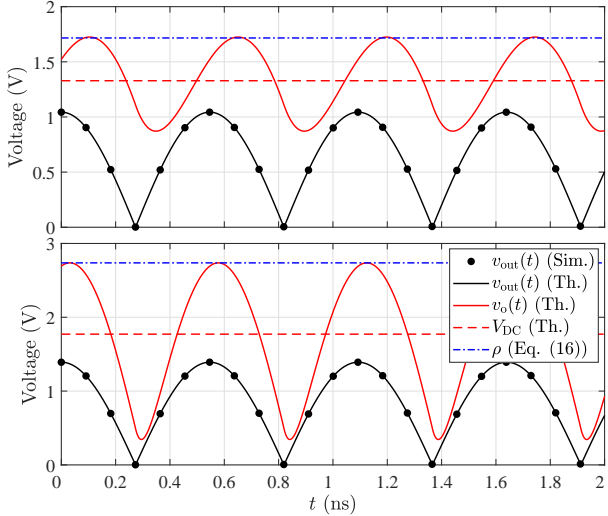}
  \caption{Output voltage versus $t$; top: $\fcut = 1$ GHz, bottom: $\fcut = 5$ GHz.}\label{fig3}
\end{figure}

\section{Numerical Results \& Conclusions}
We now validate our analytical approach for a single sinewave signal with computer simulations. The considered parameters are $A = 1$ V, $R_L = 2 ~\Omega$ and $f_c = 915$ MHz.

Fig. \ref{fig3} illustrates the output voltage in terms of the time instant $t$ for the full-wave rectifier with $\fcut = 1$ GHz (top sub-figure) and $\fcut = 5$ GHz (bottom sub-figure). Evidently, the proposed analytical framework captures the expected behavior of the rectifying circuit. Specifically, with a higher $\fcut$ (i.e., a lower $\tau$), the system achieves a higher DC voltage but there is also a larger ripple. It is also important to point out, that Fig. \ref{fig3} validates the analytical expression for the maximum ripple value. This is a critical aspect, as the maximum tolerance for the ripple is application-specific and is defined based on the desired objectives.

Fig. \ref{fig4} depicts the DC voltage with respect to $\fcut$ for the considered rectifier models. It is clear that for both rectifiers, $\Vdc$ converges to its maximum value as $\fcut$ increases. A faster convergence is attained for smaller values of the carrier frequency $f_c$. As expected, the full-wave rectifier achieves twice the DC values of the half-wave rectifier.

Future extensions of this work include the consideration of circuit imperfections (e.g. impedance mismatch), more realistic diode models as well as an investigation on the impact of fading and the phase shift on the energy harvesting.

\begin{figure}\centering
  \includegraphics[width=0.9\linewidth]{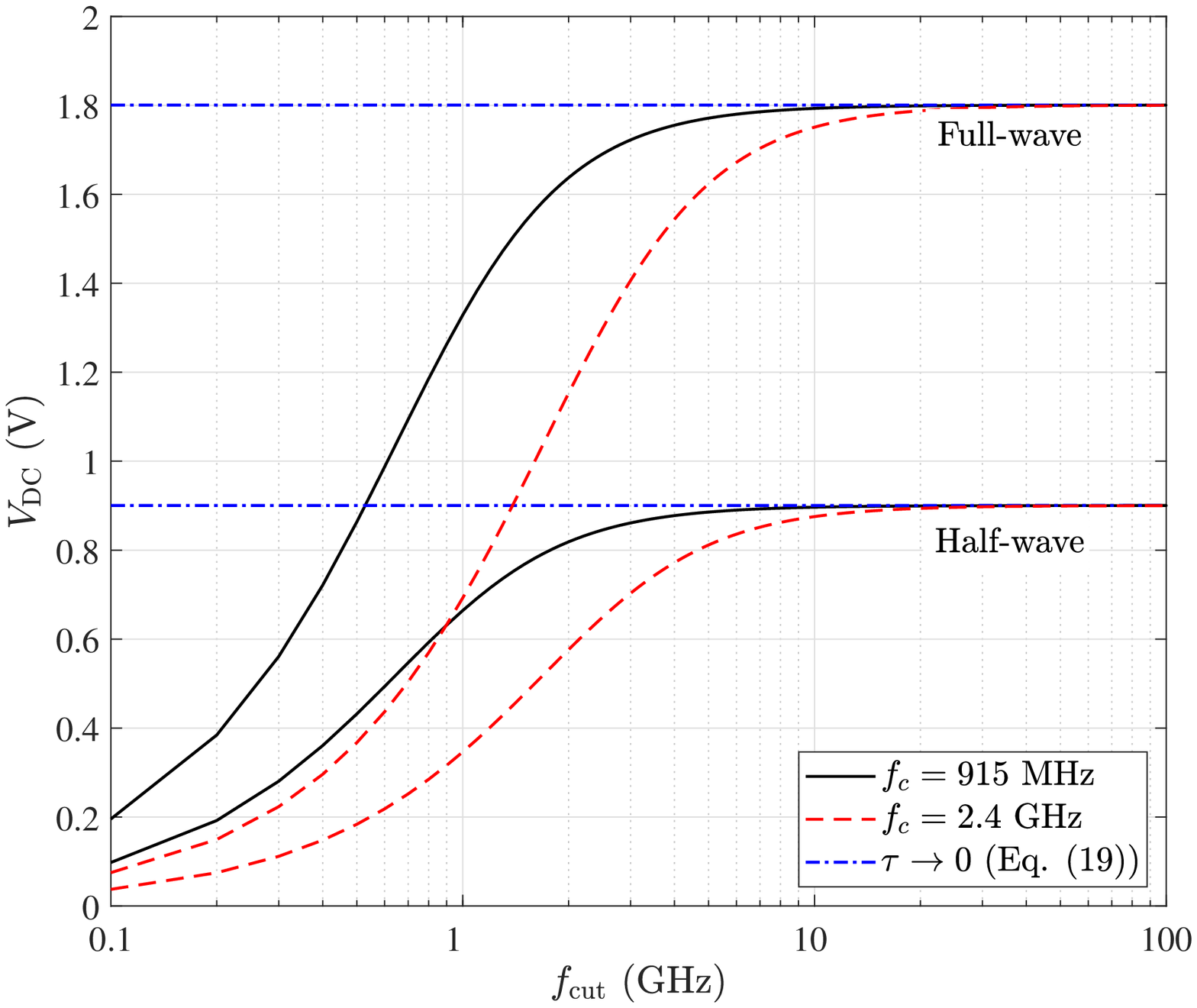}
  \caption{$\Vdc$ in terms of the cut-off frequency $f_{\rm cut}$.}\label{fig4}
\end{figure}

\section{Acknowledgments}
The authors would like to thank Kun Chen-Hu for providing the simulation results of Section \ref{circ_sim}.

\appendix
For the real signal $y(t) = \cos(2\pi f_c t)$, the Fourier coefficients (trigonometric Fourier series) are given by \cite{ISG}
\begin{align}
a_k = 2f_c \int_{-\frac{1}{2f_c}}^{\frac{1}{2f_c}} g(y(t))\cos(2\pi k f_c t) dt,
\end{align}
and
\begin{align}
b_k=2f_c \int_{-\frac{1}{2f_c}}^{\frac{1}{2f_c}} g(y(t))\sin(2\pi k f_c t) dt,
\end{align}
where recall that $g(x)=|x|$ (full-wave) or $g(x)=\max(0,x)$ (half-wave). We first evaluate the full-wave coefficients. In this case, we have
\begin{align}
a_k &= \frac{1}{\pi} \Bigg(2\int_0^\frac{\pi}{2} \cos (z) \cos(kz) dz - 2\int_\frac{\pi}{2}^\pi \cos (z) \cos(kz) dz\Bigg)\label{ak1}\\[-2mm]
&=\frac{2}{\pi} \int_{-\frac{\pi}{2}}^{\frac{\pi}{2}} \cos (z) \cos(kz) dz,
\end{align}
which follows from the transformation $z \to 2\pi f_c t$ and the fact that $\cos(z) > 0$ for $z \in (-\frac{\pi}{2},\frac{\pi}{2})$ and $\cos(z) < 0$, otherwise. The final expression of $a_k$, $k\neq 1$, can be deduced through the use of well-known trigonometric identities \cite{ISG}, whereas for $k=1$, \eqref{ak1} is equal to zero. In a similar way, we can evaluate the coefficient $b_k$ by
\begin{align}
b_k = \frac{2}{\pi} \int_{-\frac{\pi}{2}}^{\frac{\pi}{2}} \cos (z) \sin(kz) dz,
\end{align}
which it is easy to see that it reduces to $b_k = 0$ for all $k$.

The half-wave coefficients can be derived by using the methodology above. The main difference is the fact that, in this case, we have
\begin{align}
a_k &= \frac{1}{\pi}\int_{-\pi}^\pi \max(0,\cos(z)) \cos(kz)dz\nonumber\\
&= \frac{1}{\pi} \int_{-\frac{\pi}{2}}^{\frac{\pi}{2}} \cos(z) \cos(kz)dz,
\end{align}
which gives \eqref{coeff_half} and completes the proof.

\end{document}